\newtheorem{proposition}{Proposition}
\newtheorem{corollary}{Corollary}
\begin{document}
%

\title{Throughput Analysis  in Cache-enabled\\  Millimeter Wave HetNets with\\ Access and Backhaul Integration}
%

\author{\IEEEauthorblockN{Chenwu~Zhang,
        Hao~Wu,
        Hancheng~Lu,
        Jinxue~Liu
        }
\IEEEauthorblockA{University of Science and Technology of China, Hefei, China, 230027\\
Email: cwzhang@mail.ustc.edu.cn, hwu2014@mail.ustc.edu.cn,
 hclu@ustc.edu.cn, jxliu18@mail.ustc.edu.cn}

}

\maketitle

\begin{abstract}
Recently, a mmWave-based access and backhaul integration heterogeneous networks (HetNets) architecture (mABHetNets) has been envisioned to provide high wireless capacity. Since the access link and the backhaul link  share the same mm-wave spectral resource, a large spectrum bandwidth is occupied by the backhaul link, which hinders the  wireless access capacity improvement. To overcome the backhaul spectrum occupation problem  and improve the network throughput in the existing mABHetNets, we introduce the cache at base stations (BSs). In detail, by caching popular files at small base stations (SBSs), mABHetNets can of\mbox{}f\mbox{}load  the backhaul link traffic and transfer the redundant   backhaul spectrum to the access link to increase the network throughout. However, introducing cache in SBSs will also incur additional power consumption and reduce the transmission power, which can lower the network throughput. In this paper, we investigate  spectrum partition between the access link and the backhaul link as well as cache  allocation to improve the network throughput in mABHetNets. With the stochastic geometry tool, we develop an  analytical framework to characterize cache-enabled mABHetNets and  obtain  the signal-to-interference-plus-noise ratio (SINR) distributions in line-of-sight (LoS) and non-line-of-sight (NLoS) paths.   Then we  utilize the SINR distribution to derive the average potential throughput (APT).  Extensive numerical  results show that introducing cache can bring up to 80\% APT gain  to the existing mABHetNets.
\end{abstract}

\begin{IEEEkeywords}
Millimeter Wave; Cache; Stochastic Geometry; Spectrum Partition; Average Potential Throughput
\end{IEEEkeywords}

%
\IEEEpeerreviewmaketitle

\section{Introduction}
In recent years,   mmWave-based access and backhaul integration heterogeneous cellular networks (mABHetNets) has been envisioned in the 5th generation mobile communication technology (5G) dense cellular networks to satisfy the rapidly growing traffic demand \cite{IAB1,IAB2,3GPPIAB}. In mABHetNets, high-power mmWave micro base stations (MBSs) are overlaid by denser  lower-power mmWave small base stations (SBSs) where   MBSs and SBSs provide high  rate service to users by wireless access link, while the MBSs  maintain the backhaul capacity of the SBSs by wireless backhaul link. 
In 5G dense cellular networks,  average potential throughput (APT) is a key performance metric \cite{Throuhgput1,Throuhgput2}. 
 Given that both the access link and the backhaul link  share the same mmWave spectrum resource, mmWave spectrum partition between access link and backhaul link has played an important role in APT \cite{Partition1,Partition2,Partition3}.  \cite{Partition1}  explores the optimal partition of access and backhaul spectrum to maximize the rate coverage.  \cite{Partition2} leverages allocated resource ratio between radio access and backhaul to study maximization of network capacity by considering the fairness among SBSs. \cite{Partition3} jointly studies the beamforming and spectrum partition to improve the network capacity of mABHetNets. However,   according to the findings in \cite{Partition3}, up to 50\% mmWave spectrum is used in backhaul link to satisfy the high speed data traffic.  Such stubborn ``\emph{spectrum occupation}'' in mABHetNets has  restricted network performance such as APT to achieve a better possible improvement.

Nowadays, enabling caching at base stations (BSs)   has been considered as a promising way to alleviate the backhaul   problem \cite{Cacheaware,2u,caching,MostPop}. Statistical reports have shown that a few popular files requested by many users account for most of the backhaul traffic load \cite{2u,MostPop}. Based on this fact, popular files can be proactively cached at SBSs, and then the cached files can be delivered to users from SBSs directly, which can significantly alleviate backhaul traffic pressure. \cite{Cacheaware} exploits the cache 
to overcome the backhaul capacity limitations and enhance  users' quality of service (QoS). Besides, \cite{caching} considers the most popular file caching strategy to minimize the backhaul network cost.

However, most existing work mainly focuses on caching in wire backhaul based cellular networks, while neglecting the role of caching in wireless backhaul based  mABHetNets. In fact, by caching popular files at the cache of SBSs,  mABHetNets can of\mbox{}f\mbox{}load a lot of the wireless backhaul traffic. With the backhaul traffic reduced, a part of backhaul mmWave spectrum  can be transferred to the access link, which further increases the access link capacity. Since the cache will consume  the power resource,  the  transmission power will be reduced, which will lower the data rate\cite{BSPowerModel,BSPowerModel1}. Therefore, power consumption should be carefully considered when cache capability is enabled at SBSs.

In this paper,  we introduce the  cache in SBSs to  improve the network throughput of mABHetNets. To the best of our knowledge, there is no theoretical research that investigates the performance of the mABHetNets when cache is involved. Motivated by such fact, we propose  a stochastic geometry  based analytical framework for the two-tier cache-enabled  mABHetNets.  In this framework,  we derive the APT expression. Then we investigate the joint impact  of spectrum  partition   and   cache allocation on APT.  The major contributions of this paper are summarized as follows.
\begin{itemize}
\item By stochastic geometry tool, we develop a  fundamental analytical framework to study the cache-enabled mABHetNets, where SBS caching model and spectrum partition model are carefully described.
\item Based on  the analytical framework, we analyze  SBS and MBS association probability ,and obtain  the SINR distributions of the the line-of-sight (LoS) and non-line-of-sight (NLoS) transmission path. Finally,  we  derive the APT expressions  of users covered by the SBS tier and MBS tier, respectively, where  spectrum  partition and  cache  allocation are involved.
\item Extensive numerical  results are carried out to  validate the motivation and effectiveness of our work. We depict the impact of spectrum  partition and cache  allocation on APT. Then we found that there exist optimal spectrum partition and cache allocation. Besides, comparing with no cache in mABHetNets, introducing cache can add 80\% APT gain.
\end{itemize}

The rest of the paper is organized as follows. Sect. II gives an overview of the system model. APT is derived and analyzed    in Sect. III. Next, numerical results are given in Sect. IV. Finally, the conclusions of our work are drawn in Sect. V.

\section{System model}

%
In this section, 
  a two-tier downlink mABHetNet is considered as shown in Fig. \ref{example}. The first tier  consists of lower-power SBSs while the second tier consists of higher-power MBSs. The MBSs are connected to the mobile core network by high-capacity  optical fiber links. Besides, the MBSs provide wireless backhaul connections to the SBSs via   broad mmWave spectrum. Both MBSs and SBSs provide the access service to users. The locations of  MBSs  and SBSs are assumed to follow independent Poisson point processes (PPPs), which are denoted by $\Phi_m\in \mathbb{R}_2$ and $\Phi_s\in \mathbb{R}_2$  with densities  $\lambda_m$ and $\lambda_s$.  The user density  $\lambda$
is assumed to be sufficiently large  so that each BS has at least one associated user in its coverage area.

\begin{figure}[htbp]
  \centering
  \includegraphics[width=2.6in]{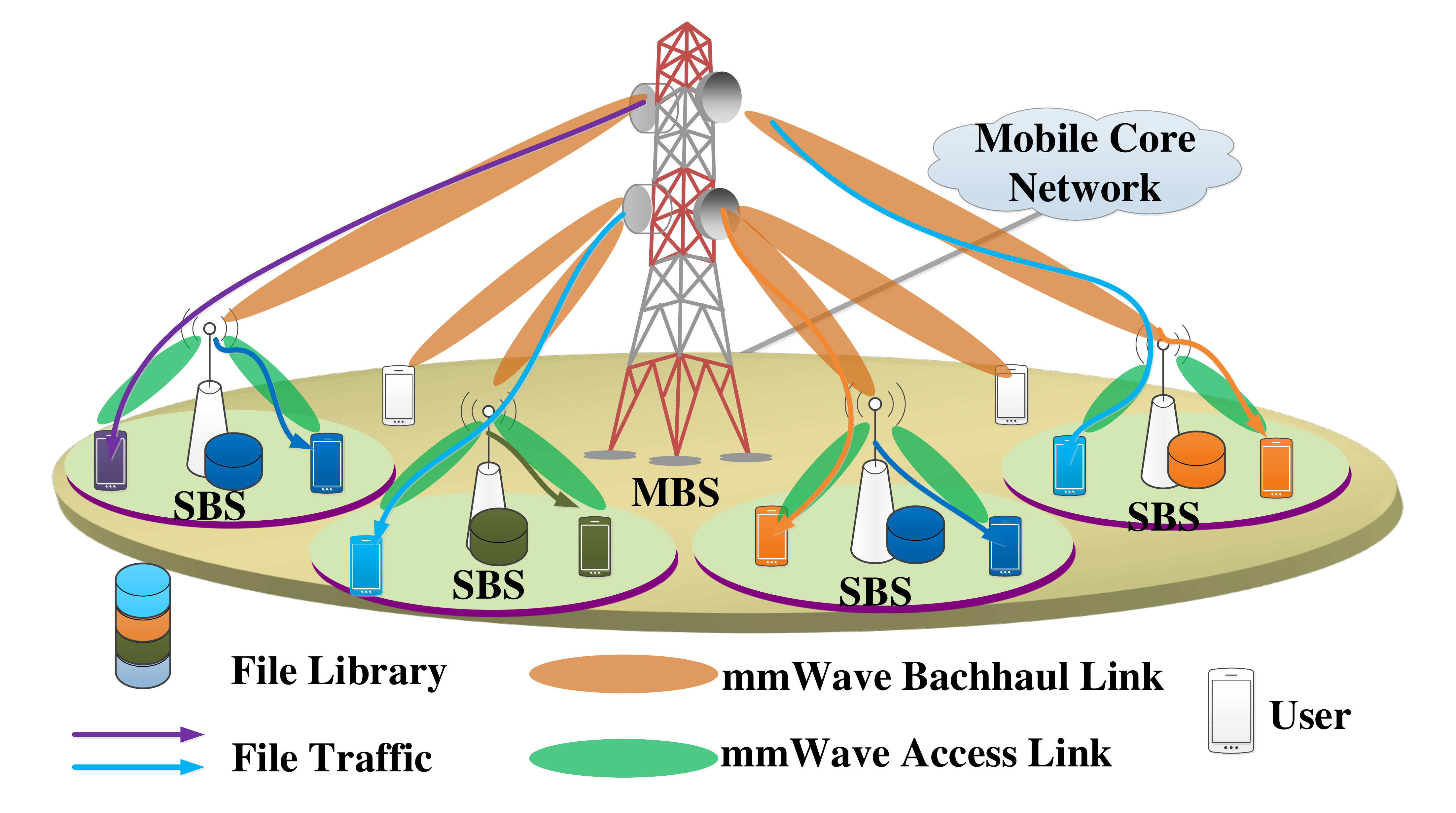}
  \captionsetup{font={small}}
  \caption{Example of downlink mABHetNet with two tiers of BSs:   cache-enabled MBSs  are overlaid with denser cache-enabled SBSs}
  \label{example}
\end{figure}


\begin{figure*}[b]
\hrulefill 
\begin{align} \label{LOSModel}
\qquad \qquad\operatorname{SINR}_{s}(r)
=\frac{P_{s}^{tr} B_{s}h_{s  } L(r)}{I_s+I_m+N_{0}}
 =\nonumber
 \frac{(P'_s-{{w'}_{\mathrm{ca}}^{s}}C) B_{s}h_{s  } L(r_{s })}
   {\sum\limits_{i \in \Phi_{s} \backslash b_{s,0}}(P'_s-{{w'}_{\mathrm{ca}}^{s}}C) B_{s } h_{s, i}L(r_{s, i })
   +\sum\limits_{l \in \Phi_{m} } P_{m}^{tr}  B_{m}h_{m, l }L(r_{m, l })+N_{0}}\tag{3} 
   \\
\qquad \qquad \operatorname{SINR}_{m}(r)
=\frac{P_{m}^{tr} B_{m}h_m L(r_{m} )}
{I'_s+I'_m+N_{0}}=\nonumber
\frac{P_{m}^{tr} B_{m}h_m L(r_{m} )}
{\sum\limits_{i \in \Phi_{s}}  (P'_s-{{w'}_{\mathrm{ca}}^{s}}C)B_{s} h_{s, i} L(r_{s, i})+\sum\limits_{l \in \Phi_{m} \backslash b_{m,0}} P_{m}^{tr} B_{m} h_{m, l} L(r_{m, l})+N_{0}}
\tag{4}
\end{align}
\end{figure*}

\begin{table}
\centering
\caption{Main Symbols}
\label{Symbols}
\renewcommand\arraystretch{1.2}
\begin{tabular}{cp{6cm}}
\toprule
\textbf{Symbol}                     & \textbf{Meaning}                                       \\
\midrule
$W,W_{ac},W_{bh}$                         &Total spectrum bandwidth, access link bandwidth and backhaul link bandwidth\\
$\eta$                              &mmWave spectrum partition ratio for access link\\
$\alpha_\mathrm{L},\alpha_{\mathrm{NL}}$                            & Path loss exponet in LoS and NLoS transmission\\
$\lambda_m,\lambda_s,\lambda$                             &Density of MBS, SBS and user, respectively\\
$C$                                 &Cache capacity of an SBS\\
$F$                                 &Number of files \\
$p_h$                               &Cache hit ratio of an SBS\\
$P_{m}^{tot},P_{s}^{tot}$           &Total power of an MBS and an SBS, respectively\\
$P_{m}^{tr},P_{s}^{tr}$             &Transmission power of an MBS and an SBS,respectively\\
$B_m,B_s$                           &The association bias factor of MBS and SBS\\
$w_{ca}$                            &Caching power consumption coefficient\\
\bottomrule
\end{tabular}
\end{table}
\subsection{Caching Model}
The file library is denoted by $\mathcal{F}$ and there are $|\mathcal{F}|= F$ files in the library. It is assumed that each file has the equal size \cite{Samesize}.  Zipf distribution  is widely used to model the popularity of file $f$, $\forall f \in\{1,...,F\}$:
$
  p_{f}=\frac{f^{-\gamma_p}}{\sum_{g=1}^{F}g^{-\gamma_p}}
$,
where $\gamma_p$ is the Zipf exponent reflecting different levels of skewness of the distribution.  $\gamma_p$ is between 0.5 and 1.0, where higher value causes more ``peakiness'' of the distribution \cite{MostPop}.

The cache capacity of the SBS and MBS is denoted by $F$ and $C$ (file units), respectively. As MBS is often equipped with  large  cache capacity, we assume that the MBS can cache all the files\cite{MBSFileLibrary}.   Since each SBS has a limited cache capacity, it only can contain $C$ files.  Each SBS caches the
most popular contents and the cache hit ratio of an SBS is calculated as\cite{MostPop}
\begin{equation}\label{HitRatio}
  p_{h}=\frac{\sum_{f=1}^{C}f^{-\gamma_p}}{\sum_{g=1}^{F}g^{-\gamma_p}}
\end{equation}

\subsection{Wireless Transmission Model}
The mmWave based  wireless   link   can be either line of sight (LOS) or non-line of sight (NLoS) transmission. Then, the  path loss functions are defined as \cite{LOSModel}
\begin{align}\label{LOSModel}
 \operatorname{L}(r)
 =\left\{
 \begin{array}{ll}
 {A_{\mathrm{L}} r^{-\alpha_{\mathrm{L}}},}      &{\text{with  $\mathcal{P}_\mathrm{L}(r)$}} \\
 {A_{\mathrm{NL}} r^{-\alpha_{\mathrm{NL}}},}    &{\text{with  $\mathcal{P}_\mathrm{NL}(r)=1-\mathcal{P}_\mathrm{L}(r)$}}
 \end{array}
 \right.
\end{align}
together with the   LoS probability  $\mathcal{P}_{\mathrm{L}}(r)=\min\left(\frac{18}{r},1\right)(1-e^{-\beta r})+e^{-\beta r}$
and $r$ is the transmission distance.   $\beta \geq 0$ is the parameter that captures
density and size of obstacles between the transmitter and the reveicer.

By involving the caching power consumption,     the power consumption model   of  one SBS is \cite{BSPowerModel}:$P_{s}^{tot}= \rho_s P_{s}^{tr}+P_{s}^{fc}+P_{s}^{ca}$,
where   $P_{s}^{tr}$ denote transmit powers consumed at  an SBS. $ \rho_s$ reflects the impact of power amplifier and cooling on transmit power. $P_{s}^{\mathrm{fc}}$ is the fixed circuits-related power consumption. The caching power  consumption is   $P_{s}^{ca}=w_{ca}C$, where $\omega_{ca}$ is the power coefficient of cache hardware in watt/bit. For each SBS, since the total power consumption $P_{s}^{tot}$ is usually given,    $P_s^{tr}=\frac{P_{s}^{tot}-P_{\mathrm{s}}^{\mathrm{fc}}-w_{\mathrm{ca}}C}{\rho_s}
=P'_s-{w'}_{\mathrm{ca}}^{s}C$, where $P'_s=\frac{P_{s}^{tot}-P_{\mathrm{s}}^{\mathrm{fc}}}{\rho_s}$ and ${w'}_{ca}^s=\frac{w_{ca}}{\rho_s}$. Note that considering the total  power  consumption,  the maximum cache capacity  is  $\frac{P_{s}^{tot}}{w_{\mathrm{ca}}}$.
 Similarly,  the  transmission power of one MBS is $P_m^{tr}
=P'_m-{w'}_{\mathrm{ca}}^{m}F$, where $P'_m=\frac{P_{m}^{tot}-P_{\mathrm{m}}^{\mathrm{fc}}}{\rho_m}$ and ${w'}_{ca}^m=\frac{w_{ca}}{\rho_m}$.  Here, since the MBS contains all files, the  transmission power of one MBS is fixed.

The  SINR  of a typical user at a   distance $r$ from its associated SBS and MBS  are at the bottom of next page.
   $h_{s},h_{m }$ are the small-scale fadings from SBS and MBS. $L(r_{m}),L(r_{s})$ are the path losses   from the serving SBS or MBS to the typical user. $r_{s}$ ($r_{m}$) is the distance between the assocation SBS $b_{s,0}$  (assocation MBS $b_{m,0}$) and the typical user.  $ r_{s, i}$($ r_{m,l}$) is the distance between the $i$-th SBS($l$-th MBS )and the typical user.    $N_{0}$ is   the additive white Gaussian noise. To backhaul the data traffic of the SBS, the MBS provides the wireless backhaul link. For a typical SBS that a random distance $r_{bh}$ from its associated MBS, the SINR of   downlink backhaul is
\begin{align}\label{M-SBSSINR}
&\operatorname{SINR}_{bh} (r_{bh})
=\frac{P_{m}^{tr} B_m h_{m } L(r_{bh})}
{ I_{bh}+  N_0} \nonumber  \\&=
\frac{P_{m}^{tr}B_m h_{m } L(r_{bh})}
{ \sum_{i \in \Phi_{m} \backslash b_{m,0 }} P_{m}^{tr} B_m  h_{m,i}L(r_{bh,i})+  N_0}
\setcounter{equation}{4}
\end{align}

\subsection{Spectrum Partition Model}
\vspace{-0.3cm}
\begin{figure}[htbp]
  \centering
  \setlength{\abovecaptionskip}{0.cm}
  \includegraphics[width=3.5in]{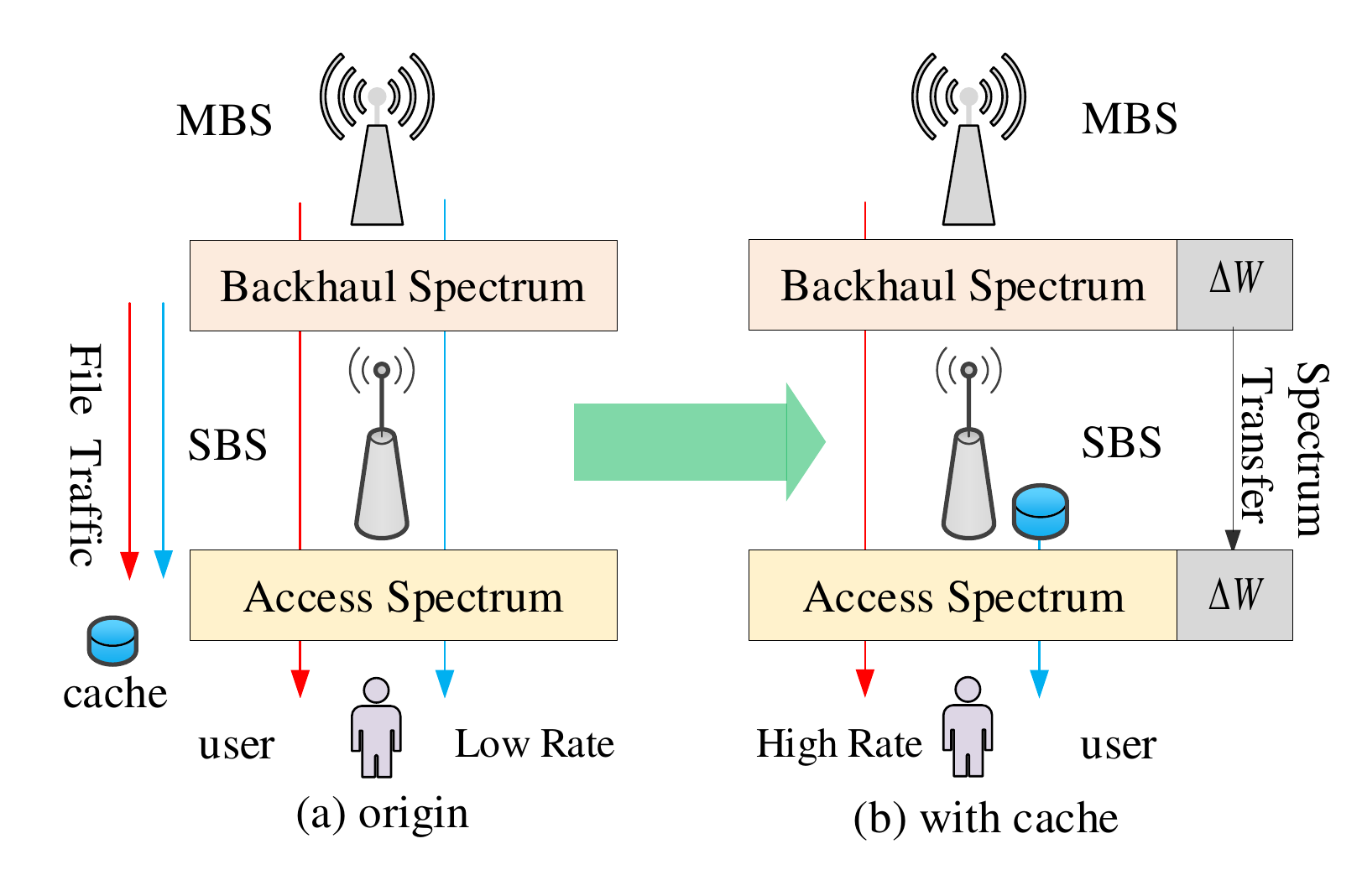}
  \captionsetup{font={small}}
  \caption{Spectrum partition between access and backhaul (a) original and (b)with cache}
  \label{schematic}
\end{figure}
 In Fig. \ref{schematic}, both the access link and the backhaul link share the mmWave spectrum.  The total mmWave spectrum bandwidth $W$ for downlink transmission is partitioned into two parts: $W_{ac}=\eta W$  for access and $W_{bh}=(1-\eta )W$ for backhaul.  $\eta \in [0, 1]$  denotes the part of spectrum for access link. The file delivery is both related with the access link and backhaul link. Then  by changing $\eta $, both the access rate and backhaul rate in the same transmission path can maintain an effective  transmission. When the cache is introduced, cached files can be directly delivered from SBS and the backhaul traffic can be saved (Fig. \ref{schematic}(b) ). Thus, part mmWave spectrum resource in backhaul link can be shifted to access link and the data rate is improved.

\section{APT analysis}

In this section, we analyze and derive APT in the cache-enabled mABHetNet. Firstly, we derive  the  SINR distribution of the typical user. Then, the SINR coverage probability of the SBS covered by the MBS is also obtained. Finally, APT is derived based on analysis of the SINR.

\subsection{SINR analysis}
To study the APT  performance of the mABHetNet, we need to investigate the SINR distribution of the user  covered by SBS/MBS tier via access link or the SINR distribution of  the SBS   covered by MBS via backhaul link.
\subsubsection{Association Probability}
According to \cite{ProofProcedure}, the probability distribution functions (PDFs) of $r$ (the distance between the  user and the  nearest BS via a LoS/NLoS path) are
    \begin{align}\label{LOSUserClosestSBSProbabilty}
        f_{R_{k}}^{\mathrm{L}}(r)     &=\mathcal{P}_L(r) \times \exp \left(-\pi r^{2} \lambda_k\right) \times 2 \pi r \lambda_k \\
        f_{R_{k}}^{\mathrm{N L}}(r)   &= \mathcal{P}_\mathrm{NL}(r) \times \exp \left(-\pi r^{2} \lambda_k\right) \times 2 \pi r \lambda_k
    \end{align}
where $k\in \{s,m\}$ denotes the index of SBS tier or MBS tier.

 Similarly,  the PDFs of distance $r$ (between  the SBS  and the  associated the nearest MBS via a LoS/NLoS path)  are
\begin{align}\label{PDFSBSMBSLOS}
  f_{R_{bh}}^{\mathrm{L}}(r)&=\mathcal{P}_L(r) \times  \exp \left(-\pi r^{2} \lambda_m\right) \times 2 \pi r \lambda_m \\
  f_{R_{bh}}^{\mathrm{NL}}(r)&=\mathcal{P}_{\mathrm{NL}}(r) \times  \exp \left(-\pi r^{2} \lambda_m\right) \times 2 \pi r \lambda_m
\end{align}

In the mABHetNet,   we need to analyze the probability that a user  is associated with SBS tier or with MBS tier.  Besides, since  SBS may be associated with MBS via different  transmission paths, different SBS backhaul association probabilities should be derived.
We consider the  maximum biased received power where a mobile user is associated with the strongest BS in terms of  the received power.  The following lemma provides the per-tier association probability via LoS and NLoS path.
\begin{proposition}\label{SBSMBSAsso}
For the given distance $r$, the probabilities that a typical user is associated with the SBS tier by LoS link and NLoS link are
    \begin{align}\label{UserAssociatedSBS}
    F_{s}^{\mathrm{L}}(r) &=p_{ln}^{ss}(r)p_{ll}^{sm}(r)p_{ln}^{sm}(r)f_{R_{s}}^{\mathrm{L}}(r)\\
    F_{s}^{\mathrm{NL}}(r)&=p_{nl}^{ss}(r)p_{nl}^{sm}(r)p_{nn}^{sm}(r)f_{R_{s}}^{\mathrm{NL}}(r)
\end{align}
Then, the probabilities that a typical user is associated with an MBS tier by LoS link and NLoS link are
\begin{align}\label{UserAssociatedMBS}
  F_{m}^{\mathrm{L}}(r)&=p_{ln}^{mm}(r)p_{ll}^{ms}(r)p_{ln}^{ms}(r)f_{R_{m}}^{\mathrm{L}}(r)\\
  F_{m}^{\mathrm{NL}}(r)&=p_{nl}^{mm}(r)p_{nl}^{ms}(r)p_{nn}^{ms}(r)f_{R_{m}}^{\mathrm{NL}}(r)
\end{align}
where  $p_{ln}^{ss}(r)$ denotes the probability of the event that the user  obtains the desired LoS signal from SBS tier and the NLoS  interference from an SBS tier. The other probabilities have the similar definitions and can be found in the proof.
\end{proposition}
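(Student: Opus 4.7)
The plan is to start from the maximum biased received power association rule and exploit the independent-thinning structure of the LoS/NLoS point processes. First I would condition on the distance $r$ of the nearest LoS SBS, which contributes the factor $f_{R_s}^{\mathrm{L}}(r)$ from (5). Conditional on this, the user is served by that LoS SBS if and only if no competing BS --- drawn from the NLoS-SBS, LoS-MBS, or NLoS-MBS classes --- yields a larger biased received power $P_k^{tr} B_k L(\rho)$.

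For each of the three competing classes I would invert the biased-power inequality to define an equivalent distance $\rho^\star_{k',p'}(r)$ satisfying $P_s^{tr}B_s A_{\mathrm{L}} r^{-\alpha_{\mathrm{L}}} = P_{k'}^{tr}B_{k'}A_{p'}(\rho^\star)^{-\alpha_{p'}}$; the competitor loses exactly when its nearest representative sits beyond $\rho^\star$. Because $\Phi_s$ and $\Phi_m$ are independent PPPs and the Bernoulli thinning of each tier by $\mathcal{P}_{\mathrm{L}}(\cdot)$ produces independent LoS and NLoS inhomogeneous PPPs, the three ``no closer competitor'' events are mutually independent. Their probabilities are the void probabilities of the corresponding thinned PPPs on $[0,\rho^\star_{k',p'}(r))$, which I would identify with $p_{ln}^{ss}(r)$, $p_{ll}^{sm}(r)$, and $p_{ln}^{sm}(r)$ respectively. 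Multiplying these three factors by $f_{R_s}^{\mathrm{L}}(r)$ then gives (9); the NLoS-SBS identity (10) and the two MBS identities (11)--(12) follow by symmetric reasoning after swapping the serving tier and the serving path type, where the subscripts on $p$ always encode (serving path, interfering path) and the superscripts encode (serving tier, interfering tier).

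The hard part will be bookkeeping the equivalent-distance functions consistently across all four cases and handling the non-monotone LoS probability $\mathcal{P}_{\mathrm{L}}(\rho)=\min(18/\rho,1)(1-e^{-\beta\rho})+e^{-\beta\rho}$ inside the void-probability integrals. The resulting integrals, of the form $\exp\bigl(-2\pi\lambda_{k'}\int_0^{\rho^\star_{k',p'}(r)}\mathcal{P}_{p'}(\rho)\rho\,d\rho\bigr)$, do not admit a closed form. However, since the proposition only asserts the product factorization rather than a closed-form expression, these integrals may be absorbed into the $p^{\cdot\cdot}_{\cdot\cdot}(r)$ symbols and left implicit, mirroring how the statement itself keeps them abstract. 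One subtle point I would flag explicitly is that the independence of the NLoS-SBS thinning from the LoS-SBS thinning is what prevents any double counting between the $f_{R_s}^{\mathrm{L}}(r)$ and $p_{ln}^{ss}(r)$ factors, so the proof reduces to checking that these two point processes are indeed independent Poisson after the LoS/NLoS split, which follows from the independent Bernoulli marking of the parent PPP $\Phi_s$.
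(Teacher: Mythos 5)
The paper itself gives no proof of Proposition~\ref{SBSMBSAsso} (it defers entirely to Sect.~VI-B of the technical report \cite{ProofProcedure}), so there is no in-paper argument to compare against line by line; your proposal has to be judged on its own. The overall plan --- maximum biased received power association, equivalent distances obtained by inverting $P_s^{tr}B_sA_{\mathrm{L}}r^{-\alpha_{\mathrm{L}}}=P_{k'}^{tr}B_{k'}A_{p'}(\rho^{\star})^{-\alpha_{p'}}$, and factorization into void probabilities of the independently thinned LoS/NLoS processes --- is the standard route for this kind of statement and is the natural way to obtain a product of the form $p_{ln}^{ss}p_{ll}^{sm}p_{ln}^{sm}f_{R_s}^{\mathrm{L}}$, including the correct observation that no within-class factor $p_{ll}^{ss}$ is needed because biased power is monotone in distance within a class.

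There is, however, a concrete bookkeeping gap: you cannot simultaneously (i) identify your conditioning density with the paper's $f_{R_s}^{\mathrm{L}}(r)=\mathcal{P}_{\mathrm{L}}(r)\exp(-\pi r^{2}\lambda_s)2\pi r\lambda_s$ and (ii) take $p_{ln}^{ss}(r)$ to be the void probability of the NLoS-SBS process on $[0,\rho^{\star}_{ss}(r))$. The factor $\exp(-\pi r^{2}\lambda_s)$ in the paper's density is the void probability of the \emph{entire} SBS tier inside radius $r$ (it is the law of the nearest SBS overall, marked LoS with probability $\mathcal{P}_{\mathrm{L}}(r)$), not the law of the nearest point of the LoS-thinned process, which would carry $\exp\bigl(-2\pi\lambda_s\int_0^r\mathcal{P}_{\mathrm{L}}(t)\,t\,\mathrm{d}t\bigr)$ instead. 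Multiplying the paper's density by a full NLoS void on $[0,\rho^{\star})$ therefore double counts the absence of NLoS SBSs inside radius $r$ and undercounts the association probability. To make the factorization exact you must either (a) replace the conditioning density by the inhomogeneous nearest-LoS-SBS density and keep your $[0,\rho^{\star})$ void, or (b) keep $f_{R_s}^{\mathrm{L}}$ as in the paper and define $p_{ln}^{ss}(r)$ as the NLoS-SBS void probability only over the region between $r$ and $\rho^{\star}_{ss}(r)$ (noting that this region can be empty when $\rho^{\star}_{ss}(r)<r$, since a nearby NLoS SBS can be weaker than a LoS SBS at $r$); the cross-tier factors $p_{ll}^{sm},p_{ln}^{sm}$ are unaffected because the conditioning places no restriction on $\Phi_m$. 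With that correction, and with the independence of the LoS/NLoS split justified as position-dependent independent marking of the parent PPPs (as you state), the argument goes through.
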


\begin{proof}
   Due to the limited  space, we omit the proof.  The detailed proof procedure can be found in Sect. VI-B in our technical report \cite{ProofProcedure}.
\end{proof}

\begin{corollary}
 Similar to Proposition \ref{SBSMBSAsso}, the probabilities that a typical SBS is associated with   an MBS tier by LoS link and NLoS
link are
\begin{align}\label{SBSAssociatedMBS}
&F_{bh}^{L}(r)=p_{ln}^{bh}(r) f_{R_bh}^{\mathrm{L}}(r)\\
&F_{bh}^{NL}(r)=p_{nl}^{bh}(r) f_{R_bh}^{\mathrm{NL}}(r)
\end{align}
where $p_{ln}^{bh}(r)$ is the probability that an SBS is associated with a LoS MBS and
the interference is from NLoS MBS.  $p_{nl}^{bh}(r)$ has the similar definition. The details of $p_{ln}^{bh}(r)$ and $p_{nl}^{bh}(r)$  can be found in Sect. VI-B in our technical report \cite{ProofProcedure}.
\end{corollary}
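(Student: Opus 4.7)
The plan is to specialize the argument of Proposition~\ref{SBSMBSAsso} to the single-tier backhaul setting, where only the MBS tier $\Phi_m$ participates in both the desired signal and the interference. Compared with the user-association case, there is no cross-tier comparison, so only two conditional probabilities, $p_{ln}^{bh}$ and $p_{nl}^{bh}$, remain instead of the six combinations needed to rank SBS- and MBS-tier candidates in Proposition~\ref{SBSMBSAsso}.

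First I would condition on the event that the nearest LoS MBS to the typical SBS lies at distance $r$. By the standard thinning of $\Phi_m$ with blockage probability $\mathcal{P}_L(\cdot)$, the density of this event is exactly $f_{R_{bh}}^{\mathrm{L}}(r)$ derived above. Under the maximum biased received-power rule (averaging over fading), this LoS MBS serves the typical SBS if and only if no NLoS MBS yields a stronger mean biased received power, i.e.\ no NLoS MBS lies within the equivalent distance
\[
r^{*}_{L\to NL} \;=\; \left(\frac{A_{\mathrm{NL}}}{A_{\mathrm{L}}}\right)^{1/\alpha_{\mathrm{NL}}} r^{\,\alpha_{\mathrm{L}}/\alpha_{\mathrm{NL}}} .
\]
Treating the LoS/NLoS labels of distinct MBSs as independent (the assumption adopted throughout Proposition~\ref{SBSMBSAsso}), the NLoS-labelled MBSs form an independently thinned PPP with intensity $\lambda_m\,\mathcal{P}_{\mathrm{NL}}(t)$. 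The void probability of a disc of radius $r^{*}_{L\to NL}$ with respect to this thinned process gives
\[
p_{ln}^{bh}(r) \;=\; \exp\!\Bigl(-2\pi\lambda_m\!\int_{0}^{r^{*}_{L\to NL}}\!\mathcal{P}_{\mathrm{NL}}(t)\,t\,dt\Bigr),
\]
and multiplying by $f_{R_{bh}}^{\mathrm{L}}(r)$ yields the claimed expression for $F_{bh}^{L}(r)$. Swapping the roles of LoS and NLoS, an identical computation with equivalent distance $r^{*}_{NL\to L}=(A_{\mathrm{L}}/A_{\mathrm{NL}})^{1/\alpha_{\mathrm{L}}}r^{\alpha_{\mathrm{NL}}/\alpha_{\mathrm{L}}}$ yields $p_{nl}^{bh}(r)$ and hence $F_{bh}^{NL}(r)$.

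The main obstacle is not the calculation itself but ensuring the LoS/NLoS dependence structure is handled consistently with Proposition~\ref{SBSMBSAsso}: the derivation depends crucially on treating each MBS as being independently labelled, so that the void probability of the NLoS (resp.\ LoS) sub-process is a simple exponential. This independence assumption is standard in the mmWave stochastic-geometry literature and is the same one invoked in the referenced proof of Proposition~\ref{SBSMBSAsso}, which is why the paper states this corollary without reproducing the details.
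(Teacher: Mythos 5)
Your proposal is correct and follows essentially the approach the paper intends: the paper gives no in-text proof (it defers to Sect.~VI-B of its technical report), and the intended argument is exactly the specialization of Proposition~\ref{SBSMBSAsso} that you carry out -- condition on the serving-path type and distance with density $f_{R_{bh}}^{\mathrm{L}}(r)$ or $f_{R_{bh}}^{\mathrm{NL}}(r)$, then express $p_{ln}^{bh}(r)$ and $p_{nl}^{bh}(r)$ as void probabilities of the oppositely-labelled thinned MBS process over the equivalent-distance disc determined by the biased-received-power comparison, with only the single-tier (MBS-only) comparison surviving. The one point to keep consistent with the paper's framework is that its stated densities use the nearest-MBS form $2\pi\lambda_m r\,\mathcal{P}_{\mathrm{L}}(r)e^{-\pi\lambda_m r^2}$ rather than the nearest-point density of the LoS-thinned process, but since the corollary takes those densities as given, this does not affect your identification of the $p^{bh}$ factors.
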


\subsubsection{SINR Distribution}
This SINR distribution is defined as the SINR coverage probability that the received SINR is above a pre-designated threshold $\gamma$:
\begin{equation}
P^{\operatorname{cov}}(  \gamma)=\operatorname{Pr}[\operatorname{SINR}>\gamma]
\end{equation}
Since the user is covered either by an SBS tier or an MBS tier, we give the two SINR distributions. Then we give the SINR distribution of a typical SBS when it is covered by an MBS.
\begin{proposition}\label{ProUserSINRCoverageProbability}
The SINR coverage probability that the user is associated with the SBS  is
\begin{align}\label{UserSBSSINRCoverageProbability}
&\mathbb{P}_{k}^{cov}(\gamma)=P_{k,L}^{cov}(\gamma)+P_{k,NL}^{cov}(\gamma)\nonumber \\
 P_{k,L}^{cov}(\gamma)&=\int_{0}^{\infty}  \exp \left(\frac{-\gamma N_{0}r^{\alpha_{\mathrm{L}}}}{P_{k}^{tr}B_kA_{\mathrm{L}} } \right)  \mathcal{L}_{I_{k}}^{\mathrm{L}}  F_{k}^{\mathrm{L}}(r)\mathrm{d}r\nonumber\\
P_{k,NL}^{cov}(\gamma)&=\int_{0}^{\infty}  \exp \left(\frac{-\gamma N_{0} r^{\alpha_{\mathrm{NL}}}}{P_{k}^{tr}B_kA_{\mathrm{NL}} }\right)  \mathcal{L}_{I_{k}}^{\mathrm{NL}}   F_{k}^{\mathrm{NL}}(r)d r
\end{align}
where $k\in\{s,m\}$ denotes SBS or MBS.  $P_{k,L}^{cov}(\gamma)=\mathbb{E}_{r }\left[\mathbb{P}\left[\mathrm{SINR}_{k}^{\mathrm{L}} (r ) \geq \gamma\right]\right]$ is the probability that the user is covered by SBS or MBS  with LoS based signal
and $P_{k,NL}^{cov}(\gamma)=\mathbb{E}_{r }\left[\mathbb{P}\left[\mathrm{SINR}_{k}^{\mathrm{NL}} (r ) \geq \gamma\right]\right]$ is the similar probability with NLoS based signal. $\gamma$ is the threshold for successful demodulation and decoding at the receiver. Besides, $\mathcal{L}_{I_{s}}^{\mathrm{L}} = \mathcal{L}_{I_{s,m}}^{\mathrm{L}}\left(\gamma r^{\alpha_L}\right)$,$\mathcal{L}_{I_{s}}^{\mathrm{NL}} =\mathcal{L}_{I_{s,m}}^{\mathrm{NL}}\left(\gamma r^{\alpha_{NL}}\right)$, $\mathcal{L}_{I_{m}}^{\mathrm{L}}= \mathcal{L}_{I'_{s,m}}^{\mathrm{L}}\left( \gamma r^{\alpha_L}\right)$ and $\mathcal{L}_{I_{m}}^{\mathrm{NL}}= \mathcal{L}_{I'_{s,m}}^{\mathrm{NL}}\left( \gamma r^{\alpha_L}\right)$.
\end{proposition}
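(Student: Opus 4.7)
The plan is to derive the SINR coverage probability by a standard conditioning-and-decoupling argument: condition on the link type (LoS/NLoS) of the serving BS, leverage the exponential distribution of the small-scale fading to turn the coverage event into a product of a noise term and a Laplace functional of the aggregate interference, and then integrate the conditional coverage against the LoS/NLoS association densities from Proposition \ref{SBSMBSAsso}.

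More concretely, I would first apply the law of total probability to write
\begin{equation*}
\mathbb{P}_{k}^{cov}(\gamma)=\mathbb{E}_{r}\bigl[\Pr[\operatorname{SINR}_{k}^{\mathrm{L}}(r)\geq \gamma]\bigr]+\mathbb{E}_{r}\bigl[\Pr[\operatorname{SINR}_{k}^{\mathrm{NL}}(r)\geq \gamma]\bigr],
\end{equation*}
where the expectation over $r$ is taken with respect to the joint event \emph{(user associates with tier $k$, serving link is LoS/NLoS, serving distance is $r$)}. By Proposition \ref{SBSMBSAsso} the density of $r$ on the LoS (resp.\ NLoS) branch is precisely $F_{k}^{\mathrm{L}}(r)$ (resp.\ $F_{k}^{\mathrm{NL}}(r)$), so the outer expectation becomes an integral against $F_{k}^{\mathrm{L}}(r)\mathrm{d}r$ or $F_{k}^{\mathrm{NL}}(r)\mathrm{d}r$.

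Next, for the LoS branch I would rewrite the inner event. From (3)–(4), the serving signal is $P_{k}^{tr} B_{k} h_{k} A_{\mathrm{L}} r^{-\alpha_{\mathrm{L}}}$, so
\begin{equation*}
\Pr\!\left[\operatorname{SINR}_{k}^{\mathrm{L}}(r)\geq \gamma\,\big|\,r\right]=\Pr\!\left[h_{k}\geq \tfrac{\gamma\, r^{\alpha_{\mathrm{L}}}(I_{k}+N_{0})}{P_{k}^{tr}B_{k}A_{\mathrm{L}}}\,\Big|\,r\right].
\end{equation*}
Assuming Rayleigh fading ($h_{k}\sim\exp(1)$), the inner probability becomes $\exp\!\bigl(-\tfrac{\gamma\, r^{\alpha_{\mathrm{L}}}N_{0}}{P_{k}^{tr}B_{k}A_{\mathrm{L}}}\bigr)\cdot \mathbb{E}\bigl[\exp(-\tfrac{\gamma\, r^{\alpha_{\mathrm{L}}}}{P_{k}^{tr}B_{k}A_{\mathrm{L}}}I_{k})\bigr]$, and the latter expectation is by definition the Laplace transform $\mathcal{L}_{I_{k}}^{\mathrm{L}}(\gamma r^{\alpha_{\mathrm{L}}})$ of the total interference (aggregated over the two PPPs, split into LoS and NLoS contributions) evaluated at the indicated argument. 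Substituting and integrating against $F_{k}^{\mathrm{L}}(r)\mathrm{d}r$ yields the stated expression for $P_{k,L}^{cov}(\gamma)$; repeating with $\alpha_{\mathrm{NL}}$ and $A_{\mathrm{NL}}$ gives $P_{k,NL}^{cov}(\gamma)$.

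The routine steps (law of total probability, Rayleigh trick, identification with the Laplace functional) are standard. The main obstacle, which I would flag but defer to the technical report cited in Proposition \ref{SBSMBSAsso}, is bookkeeping the exclusion zones when computing $\mathcal{L}_{I_{k}}^{\mathrm{L}}$ and $\mathcal{L}_{I_{k}}^{\mathrm{NL}}$: because the user is associated with the strongest biased received power, conditioning on the serving distance $r$ and the serving link being LoS imposes lower-bound constraints on the distances to every interfering BS (LoS and NLoS, same tier and cross-tier). Deriving the correct integration limits for the PGFL of each interferer type — and thus the explicit expressions for $\mathcal{L}_{I_{s,m}}^{\mathrm{L}}$, $\mathcal{L}_{I_{s,m}}^{\mathrm{NL}}$, and their primed MBS-serving counterparts — is where all of the real work sits; the decoupling of LoS/NLoS PPPs via the thinning theorem makes this tractable but tedious.
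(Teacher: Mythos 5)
Your proposal is correct and follows essentially the same route as the paper's own proof in Appendix A: decompose the coverage probability into LoS and NLoS branches, integrate the conditional coverage against the association densities $F_{k}^{\mathrm{L}}(r)$ and $F_{k}^{\mathrm{NL}}(r)$, and use the Rayleigh-fading ($h\sim\exp(1)$) trick to factor the conditional probability into the noise exponential times the interference Laplace transform $\mathcal{L}_{I_{s,m}}^{\mathrm{L}/\mathrm{NL}}$ (or the primed MBS-serving version). Like you, the paper defers the explicit Laplace-transform/exclusion-zone computations to the cited technical report, so there is no substantive difference to report.
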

\begin{proof}
The detailed proof procedure can be seen in Appendix A.
\end{proof}
Similarly, the SINR coverage probability that an SBS is covered by an  MBS via wireless backhaul is
\begin{small}
\begin{align}\label{BHSINRCoverageProbability}
  \mathbb{P}_{bh}^{cov}(\gamma) &  =P_{bh,\mathrm{L}}^{cov}(\gamma)+P_{bh,\mathrm{NL}}^{cov}(\gamma)\nonumber \\
  P_{bh,\mathrm{L}}^{cov}(\gamma) & =\int_{0}^{\infty} \exp \left(\frac{-\gamma N_{0}r^{\alpha_{\mathrm{L}}}}{P_{m}^{tr}  B_m A_{\mathrm{L}} } \right)  \mathcal{L}_{I_{bh}}^{\mathrm{L}}\left(\gamma r^{-\alpha_L}\right) F_{bh}^{\mathrm{L}}(r)dr\nonumber\\
  P_{bh,\mathrm{NL}}^{cov}(\gamma) & =\int_{0}^{\infty}\exp \left(\frac{-\gamma N_{0} r^{\alpha_{\mathrm{L}}}}{P_{m}^{tr}  B_mA_{\mathrm{NL}} } \right)  \mathcal{L}_{I_{bh}}^{\mathrm{NL}}\left(\gamma r^{-\alpha_{NL}}\right) F_{bh}^{\mathrm{NL}}(r) d r
\end{align}
\end{small}

%


\subsection{ APT of   Cache-enable mABHetNet}
APT is a significant metric to measure the network performance and it focuses on the average user QoS requirement in terms of data rate. Next, we derive the APT expression and analyze it.
APT captures the average number of bits that can be received by the user per unit area   per unit bandwidth given a pre-designated threshold $\gamma$ \cite{Throuhgput1}. The definition of APT  is
\begin{equation}
   \mathcal{R}\left(  \gamma\right) =\lambda_{bs} W \log _{2}\left(1+\gamma\right) \mathbb{P}\left\{\operatorname{SINR} \geq \gamma\right\}
\end{equation}
where $\lambda_{bs}$ is the density of BS. $W $ is allocated bandwidth to the user. $\gamma$ is the user's SINR requirement.

In the cache-enable mABHetNet, APT is determined by cache capacity, spectrum partition and SINR threshold. Then
APT  of the cache-enable mABHetNet is denoted by
\begin{equation}\label{NetworkThroughput}
  \mathcal{R}(\eta,C,\gamma_0) = \mathcal{R}_s(\eta,C,\gamma_0)+\mathcal{R}_m(\eta,\gamma_0)
\end{equation}
where $\mathcal{R}_s$ and $\mathcal{R}_m$ are   APT of  an SBS tier and an MBS tier. $\gamma_0$ is the SINR threshold to guarantee the user throughout requirement. Then we will give the detailed APT expression.

\subsubsection{APT of user covered by an SBS tier}
For a user associated with an SBS, the transmission path includes the  access link between SBS and user, the backhaul link bewteen  SBS and MBS. Besides,  the cache in SBS tier also influence the file delivery in the transmission path. When the files are cached at the SBS, the files can be delivered to user directly. At this time, the wireless backhaul between the SBS and the MBS will not be used. Otherwise, the uncached files will be delivered through the wireless backhaul link. Given the cache in SBS,   we first give the APT of an SBS tier.
\begin{proposition}\label{proposAPTSBS}
Since the transmission  can be LoS or NLoS in wireless access link and wireless backhaul link for a user associated with an SBS tier, there are four cases in the SBS-tier throughput. Then
\begin{equation}\label{APTSBS}
   \mathcal{R}_s(\eta,C,\gamma_0) = \mathcal{R}_{s}^{ll}+\mathcal{R}_{s}^{ln}+\mathcal{R}_{s}^{nl}+\mathcal{R}_{s}^{nn}
\end{equation}
where $\mathcal{R}_{s}^{ll}$denotes the network throughput when the wireless SBS link and the wireless backhaul link, and $\mathcal{R}_{s}^{ln},\mathcal{R}_{s}^{nl},\mathcal{R}_{s}^{nn}$ have similar definitions.
\end{proposition}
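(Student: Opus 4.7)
The plan is to establish Proposition \ref{proposAPTSBS} as a total-probability decomposition of the SBS-tier APT over the four mutually exclusive LoS/NLoS combinations of the two links involved in serving a user attached to an SBS, namely the access link (SBS $\to$ user) and the backhaul link (MBS $\to$ SBS). Concretely, starting from the generic APT formula $\mathcal{R}(\gamma)=\lambda_{bs}W\log_2(1+\gamma)\,\mathbb{P}[\mathrm{SINR}\geq\gamma]$, I would condition on the joint event $\{X_{\mathrm{ac}},X_{\mathrm{bh}}\}$ with $X\in\{\mathrm{L},\mathrm{NL}\}$, and exploit the independence of the access-link propagation state from the backhaul-link propagation state, which follows because the PPPs $\Phi_s$ and $\Phi_m$ are independent and the LoS indicator of a given link depends only on its own length through $\mathcal{P}_\mathrm{L}(r)$.

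Next, for each of the four branches I would write the conditional throughput contribution as a product of three ingredients: (i) the SBS density $\lambda_s$ (since APT is computed per SBS and scaled by the density of active transmitters in the tier); (ii) the effective access bandwidth $\eta W$, because from the user's perspective the delivered rate is supported on the access-spectrum slice defined in Section II-C; and (iii) the joint SINR success probability. The joint success probability must be split further by cache hit/miss: with probability $p_h$ the requested file is already at the SBS, so only the access-link SINR $\mathrm{SINR}_s$ must exceed $\gamma_0$ on the chosen LoS/NLoS path, contributing a term proportional to the integrand of $P_{s,L}^{cov}$ or $P_{s,NL}^{cov}$ from Proposition \ref{ProUserSINRCoverageProbability}; with probability $1-p_h$ the file must traverse the backhaul first, so the end-to-end delivery requires both $\mathrm{SINR}_s\geq\gamma_0$ and $\mathrm{SINR}_{bh}\geq\gamma_0$, and by the aforementioned independence this joint event factors into the product of an access coverage integrand and a backhaul coverage integrand taken from \eqref{BHSINRCoverageProbability}. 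Summing the cache-hit and cache-miss contributions gives $\mathcal{R}_s^{XY}$ for the corresponding pair $(X,Y)\in\{l,n\}^2$.

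The final step is to invoke the law of total probability: since the four events $\{\mathrm{L,L}\},\{\mathrm{L,NL}\},\{\mathrm{NL,L}\},\{\mathrm{NL,NL}\}$ partition the sample space of propagation configurations, summing the four conditional APT contributions recovers the unconditional $\mathcal{R}_s(\eta,C,\gamma_0)$, which is exactly \eqref{APTSBS}. The dependence on $\eta$ enters through the access bandwidth $\eta W$ and through the backhaul rate constraint (when the backhaul bandwidth $(1-\eta)W$ would be needed to carry an uncached file), while the dependence on $C$ enters through $p_h$ via \eqref{HitRatio} and through the transmission power $P_s^{tr}=P'_s-w'^s_{ca}C$ embedded in every SINR expression.

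The main obstacle I expect is bookkeeping rather than a deep probabilistic step: one must justify that the backhaul event can be treated multiplicatively with the access event (i.e., check that the interference fields $I_s,I_m$ seen by the user and $I_{bh}$ seen by the SBS are computed on independent portions of $\Phi_s,\Phi_m$, so Slivnyak-type arguments leave the two Laplace transforms factorized), and that the cache-hit and cache-miss branches correctly use $\eta W$ for the access-only path versus the bottlenecked end-to-end path. Provided the propagation-state independence and the identification of cache miss with the compound access-and-backhaul coverage event are made explicit, the rest reduces to substituting Propositions \ref{SBSMBSAsso}--\ref{ProUserSINRCoverageProbability} and \eqref{BHSINRCoverageProbability} into each of the four branches and summing.
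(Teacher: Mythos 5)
Your four-way LoS/NLoS decomposition matches the paper's outer structure, but the way you construct each term $\mathcal{R}_s^{XY}$ is not what the paper does, and the difference is substantive rather than bookkeeping. The paper treats the backhaul as a \emph{capacity bottleneck}: for each path-type pair it sets
$\mathcal{R}_{s}^{ll} = \min\bigl\{ \lambda_s\,\eta W\log_2(1+\gamma_0)\,P^{cov}_{s,\mathrm{L}}(\gamma_0),\ \tfrac{1}{1-p_h}\,\lambda_m\,(1-\eta)W\log_2(1+\gamma_0)\,P^{cov}_{bh,\mathrm{L}}(\gamma_0)\bigr\}$
(and analogously for the other three cases), i.e., the deliverable SBS-tier throughput is the smaller of the access-link APT (on bandwidth $\eta W$, density $\lambda_s$) and the backhaul-supported APT (on bandwidth $(1-\eta)W$, density $\lambda_m$), where the cache enters as the amplification factor $\tfrac{1}{1-p_h}$ because only the uncached fraction $(1-p_h)$ of traffic must ride the backhaul. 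Your proposal instead folds the backhaul in probabilistically: with probability $p_h$ only the access SINR matters, and with probability $1-p_h$ you require the joint event $\{\mathrm{SINR}_s\geq\gamma_0\}\cap\{\mathrm{SINR}_{bh}\geq\gamma_0\}$ and multiply coverage probabilities, all carried on bandwidth $\eta W$.

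The gap this creates is concrete: in your construction the backhaul bandwidth $(1-\eta)W$ and the MBS density $\lambda_m$ never enter the per-branch expression (you only gesture at a ``backhaul rate constraint'' in the closing paragraph without building it in), so the resulting $\mathcal{R}_s$ would be monotonically increasing in $\eta$. That contradicts the central phenomenon the proposition is meant to capture and that the numerical section exhibits (an interior optimal spectrum partition, and spectrum being transferable to access precisely because caching relaxes the backhaul bottleneck). A joint-coverage product is also a different physical model from the paper's: the paper compares long-run average link \emph{rates} (a flow/capacity balance), not the probability that both hops simultaneously clear the SINR threshold. To align with the paper you would need to replace your cache-hit/miss probability mixture by the $\min$ of the access-link APT and the backhaul APT scaled by $\tfrac{1}{1-p_h}$, with the backhaul term built from $\lambda_m$, $(1-\eta)W$, and $P^{cov}_{bh,\mathrm{L/NL}}$ from \eqref{BHSINRCoverageProbability}; your independence/Slivnyak discussion then becomes unnecessary, since no joint two-hop coverage event appears in the paper's argument.
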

\begin{proof}
The detailed proof procedure is given in the Appendix B.
\end{proof}

\subsubsection{APT of user covered by the MBS tier}
Since the signal is directly transmitted by MBS to user via access link, similar to  Proposition \ref{proposAPTSBS}, we can easily give the expression of APT of MBS tier.
\begin{corollary}\label{corolAPTMBS}
It is easy to obtain the average throughput of the MBS tier:
    \begin{align}\label{Throughput}
    \mathcal{R}_m(\eta,\gamma_0)
       =&\lambda_m\eta W\log_2(1+\gamma_0)P^{cov}_{m,\mathrm{L}}(\gamma_0) \nonumber   \\
       &+ \lambda_m\eta W\log_2(1+\gamma_0)P^{cov}_{m,\mathrm{NL}}(\gamma_0)
    \end{align}
where $P^{cov}_{m,\mathrm{L}}(\gamma_0)$ and $P^{cov}_{m,\mathrm{NL}}(\gamma_0)$ are the SINR coverage probability that the user is associated with the MBS via LoS and NLoS path in Proposition \ref{ProUserSINRCoverageProbability}.
\end{corollary}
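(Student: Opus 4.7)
The plan is to derive the MBS-tier APT by specializing the general APT definition in equation~(17) to the MBS tier and noting which complicating factors present in the SBS case simply drop out. Recall that the paper specifies MBSs are connected to the core network via optical fiber links and that MBSs cache the entire library $\mathcal{F}$. Consequently, a user served by an MBS obtains every requested file directly over a single wireless access hop: there is no wireless backhaul link to condition on and no cache-miss event to handle. This collapses the four LoS/NLoS combinations of Proposition~\ref{proposAPTSBS} (access LoS/NLoS $\times$ backhaul LoS/NLoS) down to the two cases determined solely by the MBS-to-user propagation state.

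First I would write $\mathcal{R}(\gamma_0) = \lambda_{bs}\,W_{ac}\log_2(1+\gamma_0)\,\mathbb{P}\{\mathrm{SINR}\ge\gamma_0\}$ specialized to the MBS tier, substituting the MBS density $\lambda_{bs}=\lambda_m$ and the access-link bandwidth $W_{ac}=\eta W$ from the Spectrum Partition Model. Next, using the total probability law over the LoS/NLoS status of the serving MBS, I would decompose
\begin{equation*}
\mathbb{P}\{\mathrm{SINR}_m \ge \gamma_0\} = P^{cov}_{m,\mathrm{L}}(\gamma_0) + P^{cov}_{m,\mathrm{NL}}(\gamma_0),
\end{equation*}
where each term is the conditional coverage probability multiplied by the corresponding MBS-tier association probability $F_m^{\mathrm{L}}(r)$ or $F_m^{\mathrm{NL}}(r)$ as already integrated inside the expressions given by Proposition~\ref{ProUserSINRCoverageProbability}. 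Multiplying this decomposition by $\lambda_m\eta W\log_2(1+\gamma_0)$ and distributing yields exactly the two-term expression in equation~(19).

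There is no serious obstacle here; the only subtlety is justifying that the cache capacity $C$ does not appear in $\mathcal{R}_m$. This is because the MBS transmit power $P_m^{tr}=P'_m-{w'}_{ca}^{m}F$ depends on the (fixed) library size $F$ rather than on $C$, so $C$ never enters $P^{cov}_{m,\mathrm{L}}$ or $P^{cov}_{m,\mathrm{NL}}$; and since the MBS serves every requested file locally, there is no backhaul probability factor analogous to the $p_h$ versus $1-p_h$ split that produced the four cases in Proposition~\ref{proposAPTSBS}. Once these two observations are stated, the corollary follows as an immediate specialization of the preceding proposition rather than as a new computation, which is why it is presented as a corollary.
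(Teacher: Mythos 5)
Your proposal is correct and matches the paper's (implicit) argument: since the MBS delivers files to the user directly over the access link with bandwidth $\eta W$ and fiber/full-cache backhaul imposes no bottleneck, the corollary is just the APT definition with $\lambda_{bs}=\lambda_m$ combined with the LoS/NLoS split $\mathbb{P}_m^{cov}(\gamma_0)=P^{cov}_{m,\mathrm{L}}(\gamma_0)+P^{cov}_{m,\mathrm{NL}}(\gamma_0)$ from Proposition~\ref{ProUserSINRCoverageProbability}. Your added remarks on why $C$ and the $p_h$ factor do not appear are consistent with the paper and require no further justification.
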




\section{Numerical results}
In this section, simulations are performed to validate and evaluate the performance of the cache-enabled mHetNet. Particularly, we show the results of APT under different  scenarios.

\subsection{Parameter Setting}
%
$\lambda_s$ and  $\lambda_m$ are set to $10^{-4}$BSs/m$^2$  and $10^{-5}$BSs/m$^2$. The density of the users $\lambda$ is assumed to be $3\times10^{-4}$users/m$^2$.  $\gamma_p$   is set to 0.6\cite{MostPop}. Based on   \cite{BSPowerModel1},   each file unit has the same size of 4MB. The number of files in the file library is 1000 file units. The cache capacity of SBS    is 100 file units. To reflect the caching power model, we adopt the caching power coefficient $\omega_{ca}$ which is $2.5\times10^{-9}$W/bit \cite{BSPowerModel} and the total power of SBS and MBS is set as 9.1W and 610W to maintain the  transmission power consumption and caching power consumption. Other default simulation configurations are listed in Table \ref{parameters} \cite{Partition3}. All above settings will be changed according to different scenarios.

\renewcommand
\arraystretch{1.2}
\begin{table}[h]
\centering
\caption{Simulation parameters}
\label{parameters}
 \begin{tabular}{cc}
 \toprule
 Parameters                                  &  Values\\
 \midrule
 Total mmWave spectrum
 bandwidth $W$                        &400 MHz\\
 LoS pathloss parameters
  $A_\mathrm{L},\alpha_\mathrm{L}$         &$10^{-10.38},~2.09$\\
 NLoS pathloss parameters
 $A_{\mathrm{NL}},\alpha_{\mathrm{NL}}$  &$10^{-14.54},~3.75$\\
 Noise Power    $N_0$                                                &5 dB\\
 Fixed circuit power at MBS                                          &10.16W\\
 Fixed circuit power at SBS                                          &0.1W\\
 Power amplifier and
  cooling coefficient   $\rho_s$and $\rho_m$                  &4,~15.13\\
 Association biases of
 SBS and MBS $B_s$ and $B_m$                               &10,~1\\
 Blockage rate $\beta$                                               &2.7$\times$10$^{-2}$\\
 \bottomrule
 \end{tabular}
\end{table}
\subsection{Spectrum  Transfer under Different Cache Capacities}
\begin{figure}[H]
\centering
\includegraphics[width=2.3 in]{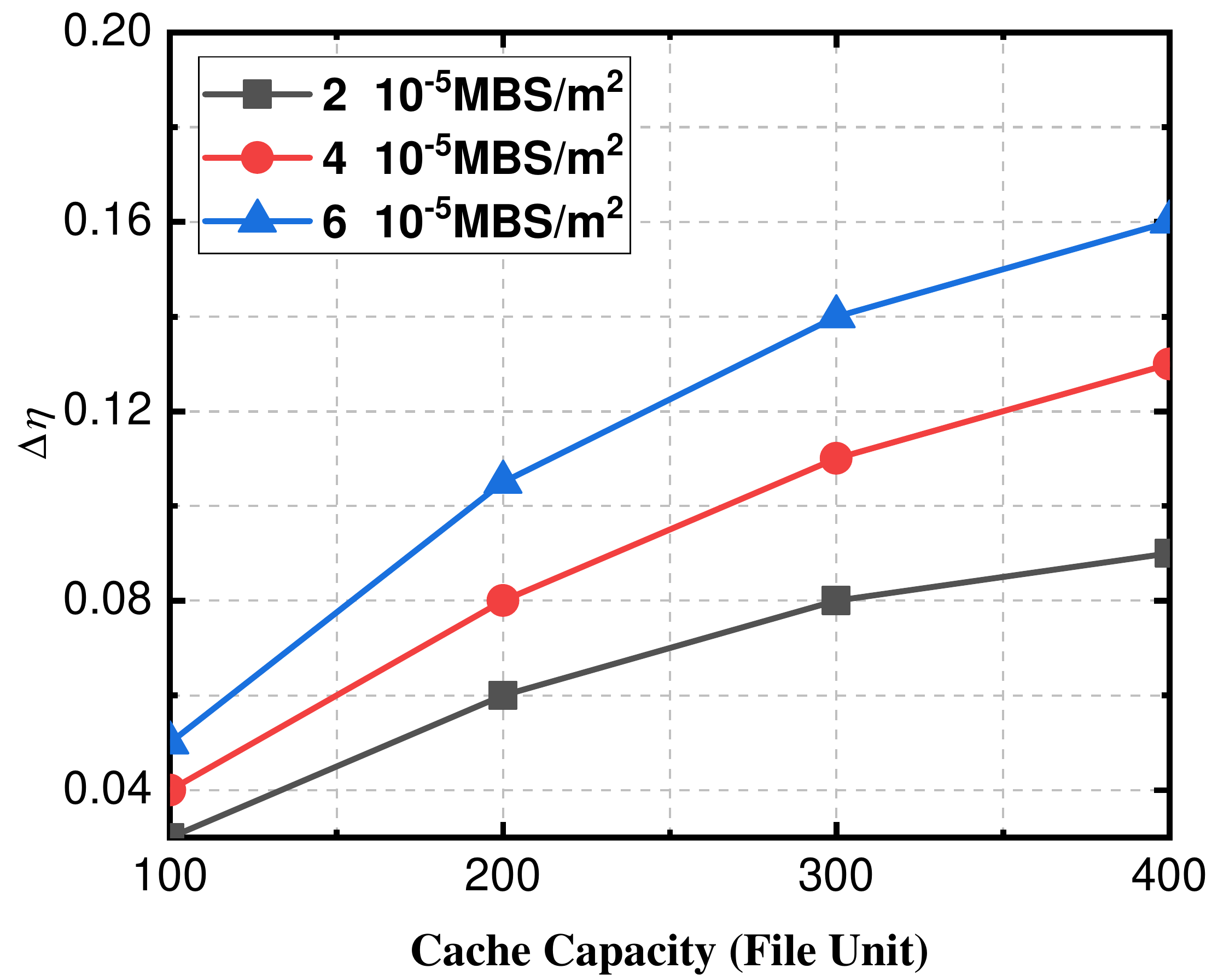}
\captionsetup{font={small}}
\caption{Spectrum  Transfer under different cache capacities( $\gamma_0$ is 10dB)}
\label{APTDelta}
\end{figure}
In Fig. \ref{APTDelta}, we show the spectrum transfer from the backhaul to  access link to obtain the optimal APT. When the cache capacity is changed from 100 to 400, more spectrum can be transferred from backhaul to access. This is because that as the cache capacity is increased, more files can be obtained from SBS directly. Then the backhaul traffic is reduced and the redundant spectrum  can be transferred to the access link to improve the data rate.  Besides, when the MBS tier becomes denser, the backhaul spectral efficiency is larger and the more spectrum can be transferred to  access link.

\subsection{The impact of Spectrum Partitions }
\begin{figure}[H]
\centering
\includegraphics[width=2.3 in]{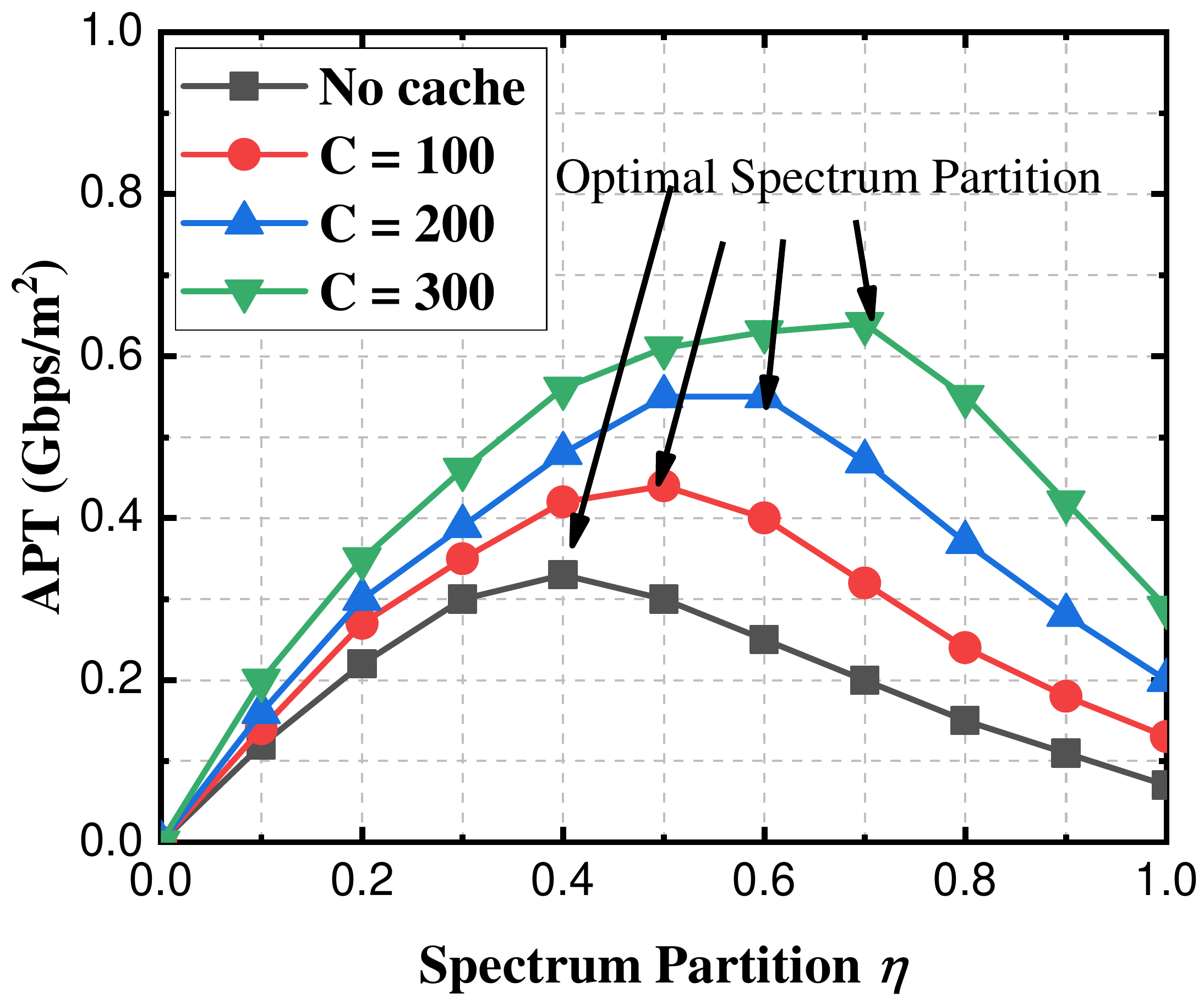}
\captionsetup{font={small}}
\caption{APT under different spectrum partitions( $\gamma_0$ is 10dB)}
\label{APTEta}
\end{figure}
In Fig. \ref{APTEta},   APT will increase as the spectrum partition increases and then it decreases. That means there exists the optimal spectrum in APT. This is because, when the backhaul spectrum bandwidth is enough, transferring some spectrum to the  access can increase the throughput of the user. Howerver, when more spectrum is used in access link, the backhaul link throughput cannot maintain the backhaul of the access throughput and the total throughput is reduced. We can also see that introducing the cache in the mABHetNet can bring 80\% APT gain approximately compared with  traditional mABHetNets with no cache.

\subsection{The impact of Cache Capacity}
\begin{figure}[H]
\centering
\includegraphics[width=2.3 in]{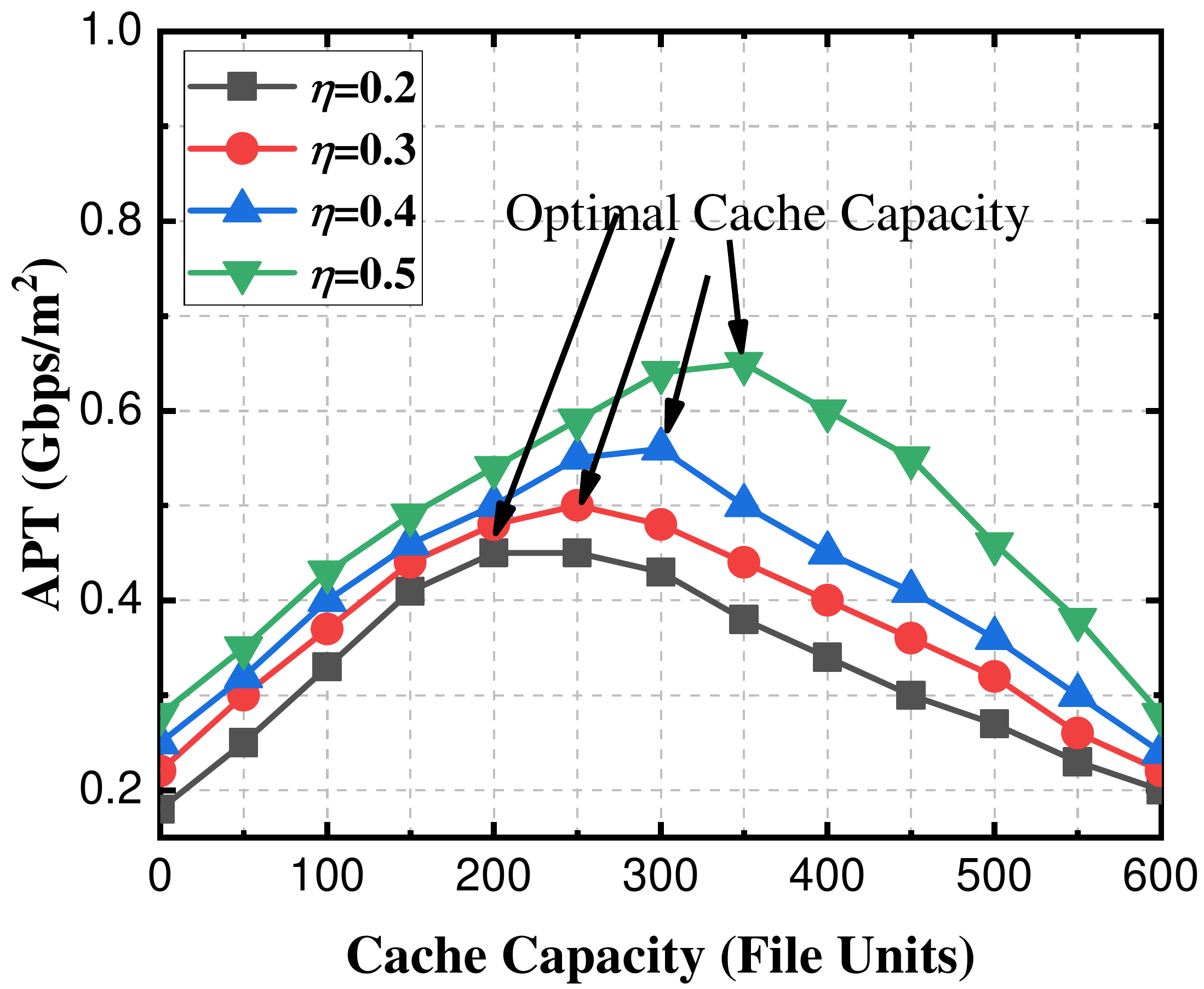}
\captionsetup{font={small}}
\caption{APT under different cache capacities($\gamma_0$ is 10dB)}
\label{APTCache}
\end{figure}

%
%
%

In Fig. \ref{APTCache}, APT also increases with the increasing cache capacity when the backhaul throughput  is limited with lower backhual spectrum. As more cache files can improve the cache hit ratio of SBS and   less files will  use backhaul resource, more files can be obtained by access without  spectrum. However, when the cache capacity is over 600, APT is close to zero. Such result is because that the maximum  power of SBS is limited and more cache capacity consumes more power and the transmission power is reduced. The reduced transmission power will decrease the data
 rate and APT.
\subsection{The impact of SINR Threshold }
\begin{figure}[H]
\centering
\includegraphics[width=2.3 in]{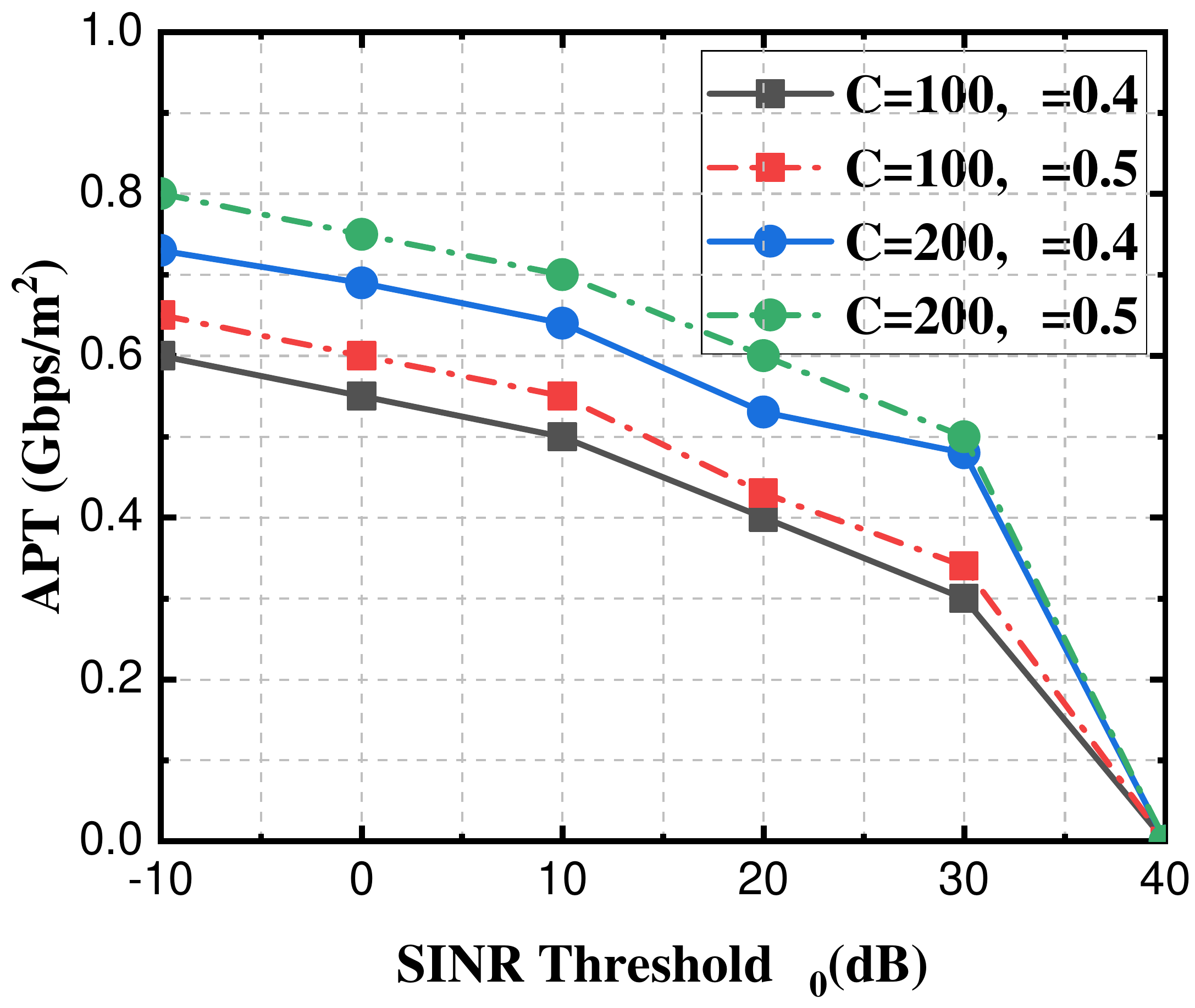}
\captionsetup{font={small}}
\caption{APT under different SINR thresholds}
\label{APTGamma}
\end{figure}
Since  APT is related to the user's rate requirement, APT is shown under different SINR thresholds in Fig. \ref{APTGamma}. High SINR threshold will decrease the APT of user. This is due to the fact that  the path loss and fading will make the received power lower and the received SINR lower than the threshold. However, at the same SINR threshold, more cache  and more spectrum partition can lead to more APT.
\section{Conclusion}
In this paper, we investigate  the impact of spectrum partition and cache capacity allocation on   APT in cache-enabled mABHetNets. Specifically, we designed a  cache-enabled mABHetNets model with the stochastic geometry tool and derived the SINR  distribution.  Then we used the SINR distribution  to obtain the   APT expression.  Via numerical
evaluations, we  first verified the impact of the  SBS cache on the spectrum partition and  found that there exist an  optimal spectrum partition and cache allocation to maximize APT. In the future, we will extend our work to multi-tier network scenarios.

\section{Appendix}
\subsection{Proof of Proposition \ref{ProUserSINRCoverageProbability}}\label{Appen3:ProUserSINRCoverageProbability}

The SINR distribution of  a user   covered by an SBS is
\begin{align}
  P_{s}^{cov}(\gamma)=P_{s,L}^{cov}(\gamma)+P_{s,NL}^{cov}(\gamma)
  \end{align}
where the SINR distribution of  a user   covered by a LoS  SBS is
\begin{small}
\begin{align}
  P_{s,L}^{cov}(\gamma)=&\mathbb{E}_{r }\left[\mathbb{P}\left[\mathrm{SINR}_{s}^{\mathrm{L}} (r ) \geq \gamma\right]\right]\nonumber \\
  =&\int_{0}^{\infty} \mathbb{P}\left[\operatorname{SINR}_{s}^{\mathrm{L}}(r)>\gamma\right] F_{s}^{\mathrm{L}}(r)\mathrm{d}r
\end{align}
\end{small}and the SINR distribution of  a user   covered by a NLoS  SBS is
\begin{small}
\begin{align}
  P_{s,NL}^{cov}(\gamma)=&\mathbb{E}_{r }\left[\mathbb{P}\left[\mathrm{SINR}_{s}^{\mathrm{NL}} (r ) \geq \gamma\right]\right] \nonumber \\
  =&\int_{0}^{\infty} \mathbb{P}\left[\operatorname{SINR}_{s}^{\mathrm{NL}}(r)>\gamma\right] F_{s}^{\mathrm{NL}}(r)\mathrm{d}r
\end{align}
\end{small}where $\gamma$ is the threshold for successful demodulation and decoding at the receiver. $\mathbb{P}\left[\mathrm{SINR}_{s}^{\mathrm{L}} (r ) \geq \gamma\right]$ means the probability of the event that the SINR of  the user covered by a SBS is over $\gamma$  via the LoS path at distance $r$:
\begin{align}
&\mathbb{P}\left[\operatorname{SINR}_{s}^{\mathrm{L}}(r)  \geq \gamma\right]\nonumber \\
&=\mathbb{P}\left[\frac{P_{s}^{tr}B_{s} A_{\mathrm{L}} r^{-\alpha_{\mathrm{L}}}}{I_s+I_m+N_{0}} \geq \gamma\right]\nonumber\\
&=\mathbb{P}\left[h_{m } \geq \frac{\gamma\left(I_s+I_m+N_{0}\right)}{P_{s}^{tr} B_{s}A_{\mathrm{L}} r^{-\alpha_{\mathrm{L}}} }\right]\nonumber \\ 
&\stackrel{(a)}{=} \exp \left(\frac{-\gamma N_{0}}{P_{s}^{tr} B_{s}A_{\mathrm{L}} r^{-\alpha_{\mathrm{L}}} }\right)  \mathcal{L}_{I_{s,m}}^{\mathrm{L}}\left(\gamma r^{\alpha_{\mathrm{L}}} \right)
\end{align}

Besides, $\mathbb{P}\left[\mathrm{SINR}_{s}^{\mathrm{NL}} (r ) \geq \gamma\right]$ means the probability of the event that the SINR of  the user covered by SBS is over $\gamma$  via the NLoS path at distance $r$:
\begin{align}
&\mathbb{P}\left[\operatorname{SINR}_{s}^{\mathrm{NL}}(r)  \geq \gamma\right] \nonumber \\
&=\mathbb{P}\left[\frac{P_{s}^{tr} B_{s} G_{s} A_{\mathrm{L}} r^{-\alpha_{\mathrm{NL}}}}{I_s+I_m+N_{0}} \geq \gamma\right]\nonumber\\
&=\mathbb{P}\left[h_{m0} \geq \frac{\gamma\left(I_s+I_m+N_{0}\right)}{P_{s}^{tr} B_{s}A_{\mathrm{NL}} r^{-\alpha_{\mathrm{NL}}} }\right]\nonumber\\
&\stackrel{(a)}{=} \exp \left(\frac{-\gamma N_{0}}{P_{s}^{tr} B_{s} h_{s}A_{\mathrm{NL}} r^{-\alpha_{\mathrm{NL}}} }\right)  \mathcal{L}_{I_{s,m}}^{\mathrm{NL}}\left(\gamma r^{\alpha_{\mathrm{NL}}} \right)
\end{align}
where (a) follows  small fading $h$$\sim$$\exp(1)$. Here the Rayleigh fading is considered.   Detailed laplace transform of the cumulative interference $\mathcal{L}_{I_{s,m}}^{\mathrm{L}}\left( \gamma r^{\alpha_{\mathrm{L}}} \right)$
$  \mathcal{L}_{I_{s,m}}^{\mathrm{NL}}\left(\gamma r^{\alpha_{\mathrm{NL}}} \right)$,
$  \mathcal{L}_{I'_{s,m}}^{\mathrm{L}}\left(\gamma r^{\alpha_{\mathrm{L}}} \right)$,
$  \mathcal{L}_{I'_{s,m}}^{\mathrm{NL}}\left(\gamma r^{\alpha_{\mathrm{NL}}} \right)$,
$  \mathcal{L}_{I_{bh}}^{\mathrm{L}}\left(\gamma r^{\alpha_{\mathrm{L}}} \right)$ and
$  \mathcal{L}_{I_{bh}}^{\mathrm{NL}}\left(\gamma r^{\alpha_{\mathrm{NL}}} \right)$
can    be found in Sect. VI-C in \cite{ProofProcedure}.

Next, we  focus on the SINR distribution of  a user   covered by an MBS :
\begin{align}
  P_{m}^{cov}
  =&P_{m,L}^{cov}(\gamma)+P_{m,NL}^{cov}(\gamma)  \nonumber \\
  =&\mathbb{E}_{r }\left[\mathbb{P}\left[\mathrm{SINR}_{m}^{\mathrm{L}} (r ) \geq \gamma\right]\right]
  +\mathbb{E}_{r }\left[\mathbb{P}\left[\mathrm{SINR}_{m}^{\mathrm{NL}} (r ) \geq \gamma\right]\right]\nonumber\\
  =&\int_{0}^{\infty} \mathbb{P}\left[\operatorname{SINR}_{m}^{\mathrm{L}}(r)>\gamma\right] F_{m}^{\mathrm{L}}(r) \mathrm{d} r \nonumber \\
  &+\int_{0}^{\infty}\mathbb{P}\left[\operatorname{SINR}_{m}^{\mathrm{NL}}(r)>\gamma\right] F_{m}^{\mathrm{NL}}(r) \mathrm{d} r
\end{align}
where 
\begin{align*}
\mathbb{P}\left[\operatorname{SINR}_{m}^{\mathrm{L}}(r)  \geq \gamma\right] 
= \exp \left(\frac{-\gamma N_{0}r^{\alpha_{\mathrm{L}}}}{P_{m}^{tr} B_{m}A_{\mathrm{L}}  }\right)  
\mathcal{L}_{I_{s,m}^{'}}^{\mathrm{L}}\left(\gamma r^{\alpha_{\mathrm{L}}} \right)
\end{align*}
\begin{align*}
\mathbb{P}\left[\operatorname{SINR}_{m}^{\mathrm{NL}}(r)  \geq \gamma\right]=  \exp \left(\frac{-\gamma N_{0} r^{\alpha_{\mathrm{NL}}}}{P_{m}^{tr} B_{m} g_{m}A_{\mathrm{NL} } }\right)  \mathcal{L}_{I_{s,m}^{'}}^{\mathrm{NL}}\left(\gamma r^{\alpha_{\mathrm{NL}}} \right)
\end{align*}

\subsection{The proof of Proposition \ref{proposAPTSBS}}
 Since the user data rate is related with the wireless access link  rate and the wireless backhaul link rate, when the access link and the backhaul link are both LoS, the user data rate is
  $\mathcal{R}_{s}^{ll}(\eta,C,\gamma_0)  = \min \{ \lambda_s\eta W\log_2(1+\gamma_0)P^{cov}_{s,\mathrm{L}}(\gamma_0), \frac{1 }{ 1-p_h }\lambda_m W(1-\eta)\log_2(1+\gamma_0)P^{cov}_{bh,\mathrm{L}}(\gamma_0)\}$. Following the same logic,
 \begin{align}\notag
    \mathcal{R}_{s}^{ln}    = &\min \{ \lambda_s\eta W\log_2(1+\gamma_0)P^{cov}_{s,\mathrm{L}}(\gamma_0),\\ \notag
    &\frac{1 }{ 1-p_h }\lambda_m W(1-\eta)\log_2(1+\gamma_0)P^{cov}_{bh,\mathrm{L}}(\gamma_0)\}\\\notag
    \mathcal{R}_{s}^{nl}  = &\min \{ \lambda_s\eta W\log_2(1+\gamma_0)P^{cov}_{s,\mathrm{L}}(\gamma_0),\\ \notag
    &\frac{1 }{ 1-p_h }\lambda_m W(1-\eta)\log_2(1+\gamma_0)P^{cov}_{bh,\mathrm{L}}(\gamma_0)\}\\\notag
    \mathcal{R}_{s}^{nn}    =& \min \{ \lambda_s\eta W\log_2(1+\gamma_0)P^{cov}_{s,\mathrm{L}}(\gamma_0),\\ \notag
    & \frac{1 }{ 1-p_h }\lambda_m W(1 -\eta)\log_2(1+\gamma_0)P^{cov}_{bh,\mathrm{L}}(\gamma_0)\}
\end{align}

Note that, $p_h = p_h(C) =\frac{\sum_{f=1}^{C}f^{-\gamma_p}}{\sum_{g=1}^{F}g^{-\gamma_p}}$ is the cache hit ratio in the SBS tier. $(1-p_h)$ reflects the probability   that the files not cached in SBS tier will be delivered through the backhaul link.  $P^{cov}_{s,\mathrm{L}}(\gamma_0)$ and $P^{cov}_{s,\mathrm{NL}}(\gamma_0)$ are the SINR coverage probabilities that the user is associated with the SBS via LoS and NLoS path in  Proposition \ref{ProUserSINRCoverageProbability}.


\end{document}